\newcommand{\q}{\mathbf{q}}
\newcommand{\uv}{\mathbf{u}}
\newcommand{\R}{\mathbb{R}}
\newcommand{\I}{\mathbf{I}}
\newcommand{\norm}[1]{\left\lVert#1\right\rVert}
\newcommand{\smallmat}[1]{\left[ \begin{smallmatrix}#1 \end{smallmatrix} \right]}
\newcommand\scalemath[2]{\scalebox{#1}{\mbox{\ensuremath{\displaystyle #2}}}}
\newtheorem{theorem}{Theorem}%[section]
\newtheorem{lemma}[theorem]{Lemma}
\newtheorem{remark}{Remark}
\newtheorem{problem}{Problem}
\newtheorem{proposition}{Proposition}
\newtheorem{definition}{Definition}
\title{\LARGE \bf
Data-Driven Synthesis of Configuration-Constrained Robust Invariant Sets for Linear Parameter-Varying Systems
}
\author{Manas Mejari, Sampath Kumar Mulagaleti, Alberto Bemporad% <-this % stops a space
% <-this % stops a space
\thanks{M. Mejari is with IDSIA Dalle Molle
	Institute for Artificial Intelligence, Via la Santa 1, CH-6962 Lugano-Viganello, Switzerland. {\tt\small {manas.mejari@supsi.ch}}} 
\thanks{S. K. Mulagaleti and A. Bemporad are with the IMT School  for  Advanced  Studies  Lucca, Italy. {\tt\small{\{s.mulagaleti, alberto.bemporad\}@imtlucca.it}} }
}
\begin{document}
\newcounter{tempEquationCounter}
\newcounter{thisEquationNumber}
\newenvironment{floatEq}
{\setcounter{thisEquationNumber}{\value{equation}}\addtocounter{equation}{1}% record equation as happened and remember number
\begin{figure*}[!t]% float following equation across columns
\normalsize\setcounter{tempEquationCounter}{\value{equation}}% record current equation number in floated location
\setcounter{equation}{\value{thisEquationNumber}}% use previous equation number
}
{\setcounter{equation}{\value{tempEquationCounter}}% set back to equation number in floated location
\hrulefill\vspace*{4pt}% add a horizontal rule separator
\end{figure*}% end float environment

}
\newenvironment{floatEq2}
{\setcounter{thisEquationNumber}{\value{equation}}\addtocounter{equation}{1}% record equation as happened and remember number
\begin{figure*}[!t]% float following equation across columns
\normalsize\setcounter{tempEquationCounter}{\value{equation}}% record current equation number in floated location
\setcounter{equation}{\value{thisEquationNumber}}% use previous equation number
}
{\setcounter{equation}{\value{tempEquationCounter}}% set back to equation number in floated location
%\hrulefill\vspace*{4pt}% add a horizontal rule separator
\end{figure*}% end float environment

}

\maketitle
\thispagestyle{empty}
\pagestyle{empty}

%%%%%%%%%%%%%%%%%%%%%%%%%%%%%%%%%%%%%%%%%%%%%%%%%%%%%%%%%%%%%%%%%%%%%%%%%%%%%%%%
\begin{abstract}
We present a data-driven method to synthesize \emph{robust control invariant} (RCI) sets for \emph{linear parameter-varying} (LPV) systems subject to unknown but bounded disturbances. A finite-length data set consisting of state, input, and scheduling signal measurements  is used to compute an RCI set and invariance-inducing controller, without identifying an LPV model of the system. We parameterize the RCI set as a \emph{configuration-constrained}  polytope whose facets have a fixed orientation and variable offset. This allows us to define the vertices of the polytopic set in terms of its offset. By exploiting this property, an RCI set and associated vertex control inputs are computed by solving a single \emph{linear programming} (LP) problem, formulated based on a data-based invariance condition and system constraints. We illustrate the effectiveness of our approach via two numerical examples.  The proposed  method can generate RCI sets that are of comparable size to those obtained by a model-based method in which exact knowledge of the system matrices is assumed. We show that RCI sets can be synthesized even with a relatively small number of data samples, if the gathered data satisfy certain excitation conditions.
\end{abstract}

\section{Introduction}\label{sec:introduction}

 Safety guarantees for constrained controlled systems can be analysed through set invariance theory~\cite{fb99}.  
 A \emph{robust control invariant} (RCI) set is a subset of the state-space in which a system affected by bounded but unknown disturbances can be enforced to evolve \emph{ad infinitum}, 
 by an appropriately designed invariance-inducing controller~\cite{fb}. 
 Many works have proposed algorithms for computating such RCI sets along with their associated controllers for \emph{linear parameter-varying} (LPV) systems, see, \emph{e.g.},
\cite{ag19a,Nguyen15, gmfp23auto, mmb23}. These approaches are \emph{model-based},  in that an LPV model of the  system is assumed to be known. However, identifying an LPV model  poses  several challenges~\cite{piga18}.  %the identified model may be inaccurate when only a few data samples are available, resulting in a large identification error. 
Modelling errors can result in the violation of the invariance property and constraints during closed-loop operations. 
%On the other hand, even if an accurate model is available using first principles, it may be too complex for efficient controller synthesis and RCI set computation.

To overcome the drawbacks of model-based methods,  data-driven approaches have emerged as favorable alternatives. 
Data-driven \emph{control-oriented} identification algorithms were proposed in~\cite{sam22,chen22} which  simultaneously compute an RCI set and a controller,  while selecting an `optimal' model from the admissible set. The approaches~\cite{sam22,chen22} synthesize RCI sets with reduced conservatism compared to the sequential approach which first selects a model to best fit the data and then computes an invariant set for it.
 Alternatively, \emph{direct} data-driven  approaches were presented in~\cite{bisoffi23,attar23, mejari23, zhong22}, which synthesize RCI sets and controllers directly from  open-loop data,  without the need of model identification.
 The  algorithm presented in~\cite{bisoffi23},
computes a state-feedback controller from open-loop data to
induce robust invariance in a \emph{given} polyhedral set, while methods proposed in~\cite{attar23, mejari23, zhong22}  simultaneously compute  invariance-inducing controllers along with RCI sets having zonotopic~\cite{attar23}, polytopic~\cite{mejari23} or ellipsoidal~\cite{zhong22} representations. %In these approaches different representations of the RCI sets, for \emph{e.g.},  zonotopic~\cite{attar23}, polytopic~\cite{mejari23} and ellipsoidal~\cite{zhong22}  are considered. 
These contributions, however, are limited to linear time-invariant (LTI) systems. 

For LPV systems,  direct data-driven  algorithms  have mainly focused  on  LPV control design, see,  \emph{e.g.}, 
LPV input-output controllers for constrained systems~\cite{piga18},  predictive controllers~\cite{verhoek21}, and  gain-scheduled controllers~\cite{miller23,verhoek22}. 
 To  our knowledge, only a recent work~\cite{mgp23} has addressed computation of RCI set for  LPV systems in a  data-driven setting. This work differs from~\cite{mgp23} in terms of description of the RCI sets and computational complexity. 
 
 We represent the RCI set with a polytope having fixed orientation and varying offset that we optimize in order to maximize the size of the set. As presented in~\cite{mmb23, villanueva2022configurationconstrained}, we enforce \textit{configuration constraints} (CC) on this polytope, which enable us to switch between their vertex and hyperplane representations. 
We exploit this property to parameterize the controller as a vertex control law which is inherently less conservative than a linear feedback control law~\cite{Gutman1986}.  A single \emph{linear program} (LP) is formulated and solved to compute the CC-RCI set with associated vertex control law, while the approach in \cite{mgp23} requires to solve a semi-definite programming problem.%, which ensures invariance robustly for all scheduling trajectories within a given set and for all bounded disturbances acting on the systems. 

Our approach does not require an LPV model of the system but only a single state-input-scheduling trajectory consisting of a finite number of data samples. We show via  numerical examples that if the gathered data satisfies certain excitation conditions, then the obtained RCI sets and associated  control inputs can be synthesized with a relatively  small number of  data samples.

\textbf{Paper organization:} The notation and preliminary results used in the paper are given in Section~\ref{sec:notation}. The problem of computing the RCI set from data collected from an LPV system is formalized in Section~\ref{sec:prob}. The configuration-constrained parameterization of RCI sets is presented in Section~\ref{sec:CC_poly}. The proposed data-based invariance conditions and maximization of the size of the set is formulated as an LP in Section~\ref{sec:data-driven invariance}. The effectiveness of the proposed algorithm is demonstrated with two numerical examples in Section~\ref{sec:example}. 
%The he concluding remarks in Section~VI close the paper.

\section{Notations and Preliminaries}\label{sec:notation}
%The set of real $m \times n$ matrix is denoted by $\mathbb{R}^{m \times n}$ and $\mathbb{D}^n_+\in\mathbb{R}^{n\times n}$  denotes  the set of all diagonal  matrices with positive diagonal entries. An identity matrix of  dimension $n$ is denoted by $I_n$ and $e_i$ represent  and its $i$-th column. A matrix of zeros with appropriate dimension is denoted as $\bm{0}$.
%The vector of ones with dimension $m$ is  denoted by $\bm{1}_{m}$. 
A set of natural numbers between two integers $m$ and $n$, $m\leq n$, is denoted by $\mathbb{I}_m^n \triangleq \{m,\ldots,n\}$.
  Let $A\in \mathbb{R}^{m \times n}$ be a matrix written according to its $n$ column vectors as  $A = \smallmat{a_1 \ \cdots \ a_n}$; we define the vectorization of $A$ as $\Vec{A} \triangleq \smallmat{a^{\top}_{1} \ \cdots \ a^{\top}_{n}}^{\top} \in \mathbb{R}^{mn}$, stacking the columns of $A$. For a finite set $\Theta = \{\theta^{1}, \theta^{2}, \ldots, \theta^{r} \}$ with $\theta^{j} \in \mathbb{R}^{n}$ for $i \in \mathbb{I}_{1}^{r}$, the convex-hull of $\Theta$  is given by,
$\mathrm{ConvHull}(\Theta) \triangleq \left\{ \theta\in  \mathbb{R}^{n}: \theta = \sum_{j=1}^{r} \alpha_j \theta^{j}, \mathrm{s.t} \ \sum_{j=1}^{r} \alpha_j =1, \alpha_j \geq 0  \right\} $. The Minkowski sum of the two sets $\mathcal{X}$ and $\mathcal{Y}$ is defined as $\mathcal{X} \oplus \mathcal{Y}:=\{x+y:x\in\mathcal{X},y\in\mathcal{Y}\}$, and set subtraction as $\mathcal{X} \ominus \mathcal{Y}:=\{x:\{x\} \oplus \mathcal{Y} \subseteq \mathcal{X}\}$.
For matrices $A$ and $B$, $A \otimes B$ denotes their Kronecker product. %It  follows the distributive property: $A \otimes (B+C) = A\otimes B+A\otimes C$ and scalar multiplication $(kA) \otimes B = A \otimes (kB) = k(A\otimes B)$.
The following results will be used in the paper:
\begin{lemma}[Vectorization] \label{lemma:vectorization}
For matrices $A \in \mathbb{R}^{k \times l}$, $B \in \mathbb{R}^{l \times m}$, $C \in \mathbb{R}^{m \times n}$ and $D \in \mathbb{R}^{k \times n}$,  the matrix equation $ABC = D$ is equivalent to~\cite[Ex. $10.18$]{abadir05},  
%\begin{subequations}
    \begin{align}
&(C^{\top} \otimes A) \vv{B} = \vv{ABC} = \vv{D}, \label{eq:vectorize1}%\\
%&\vv{ABC} = (C^{\top}B^{\top} \otimes I_{k}) \vv{A} \label{eq:vectorize2}
\end{align}
%\end{subequations}
\end{lemma}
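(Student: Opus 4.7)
The plan is to prove the key identity $\vv{ABC} = (C^\top \otimes A)\vv{B}$ by a direct column-by-column computation, from which the equivalence with $ABC = D$ follows immediately, because the vectorization map $M \mapsto \vv{M}$ is a bijection between $\mathbb{R}^{p \times q}$ and $\mathbb{R}^{pq}$, so $ABC = D$ if and only if $\vv{ABC} = \vv{D}$.

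First I would denote the columns of $B$ by $b_1, \ldots, b_m \in \mathbb{R}^{l}$ and the columns of $C$ by $c_1, \ldots, c_n \in \mathbb{R}^{m}$, and recall that the vectorization stacks columns, so $\vv{B} = [b_1^\top, \ldots, b_m^\top]^\top \in \mathbb{R}^{lm}$. Then I would compute the $j$-th column of the product $ABC$ as $AB c_j = A\sum_{i=1}^{m} c_{ij} b_i = \sum_{i=1}^{m} c_{ij}\,A b_i$. By the definition of the Kronecker product, the $j$-th block-row of $C^\top \otimes A$ is $[\,c_{1j}A, c_{2j}A, \ldots, c_{mj}A\,]$ (since $(C^\top)_{ji}=c_{ij}$), and multiplying this block-row by $\vv{B}$ gives exactly $\sum_{i=1}^{m} c_{ij}\,A b_i$. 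Hence the $j$-th block of $(C^\top \otimes A)\vv{B}$ coincides with the $j$-th block of $\vv{ABC}$.

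Stacking these block equalities for $j = 1, \ldots, n$ yields $\vv{ABC} = (C^\top \otimes A)\vv{B}$, and combining with the injectivity of $\vv{\cdot}$ gives the claimed equivalence. The only mildly delicate point is the Kronecker bookkeeping, namely ensuring that the block structure of $C^\top \otimes A$ is consistent with the column-stacking convention fixed for $\vv{\cdot}$; once this convention is pinned down, the rest is mechanical. Since this is a classical identity appearing as [Ex.~10.18]\cite{abadir05}, I would not dwell on the computation in the paper itself and simply invoke the textbook reference already cited in the statement.
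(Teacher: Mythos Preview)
Your argument is correct: the column-by-column computation establishes $\vv{ABC}=(C^\top\otimes A)\vv{B}$, and bijectivity of vectorization gives the equivalence with $ABC=D$. The paper itself does not supply a proof at all and simply cites \cite[Ex.~10.18]{abadir05}, which is precisely what you recommend doing in your final paragraph, so your proposal and the paper's treatment are aligned.
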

  \begin{lemma}[Strong duality]\label{prop:strong_duality}
Given $a \in \mathbb{R}^n$, $b \in \mathbb{R}$, $M \in \mathbb{R}^{m \times n}$ and $q \in \mathbb{R}^m$, the inequality $a^{\top} x \leq b$ is satisfied by all $x$ in a nonempty set $\mathcal{X}:=\{x:Mx \leq q\}$ if and only if there exists some $\bm{\Lambda} \in \mathbb{R}^{1 \times m}_+$ satisfying $\bm{\Lambda} q \leq b$ and $\bm{\Lambda} M=a^{\top}$. 
\end{lemma}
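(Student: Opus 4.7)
My plan is to establish the biconditional by treating each direction separately. The reverse implication is a short algebraic manipulation, while the forward implication follows from the strong duality theorem of linear programming.

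For sufficiency ($\Leftarrow$), I would assume some $\bm{\Lambda} \in \mathbb{R}^{1\times m}_+$ with $\bm{\Lambda} M = a^\top$ and $\bm{\Lambda} q \leq b$ is given. For any $x$ with $Mx \leq q$, left-multiplying by the nonnegative row vector $\bm{\Lambda}$ preserves the inequality componentwise, yielding $\bm{\Lambda} M x \leq \bm{\Lambda} q$. Substituting the two identities then gives $a^\top x = \bm{\Lambda} M x \leq \bm{\Lambda} q \leq b$, as required.

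For necessity ($\Rightarrow$), I would apply LP duality to the primal problem $p^\star := \max_x \{ a^\top x \,:\, Mx \leq q\}$. Nonemptiness of $\mathcal{X}$ ensures primal feasibility, while the hypothesis that $a^\top x \leq b$ for all $x \in \mathcal{X}$ guarantees the primal optimum is finite with $p^\star \leq b$. Under these two conditions the classical strong duality theorem for LPs applies, so the dual $d^\star := \min_{\bm{\Lambda} \geq 0} \{\bm{\Lambda} q \,:\, \bm{\Lambda} M = a^\top\}$ attains its optimum with $d^\star = p^\star \leq b$. Any optimal dual solution $\bm{\Lambda}^\star$ then satisfies the three required conditions: nonnegativity, $\bm{\Lambda}^\star M = a^\top$, and $\bm{\Lambda}^\star q \leq b$.

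The only subtlety I anticipate is verifying that the strong duality theorem applies without qualification: feasibility of $\mathcal{X}$ rules out dual infeasibility propagating to an unbounded primal, and the upper bound $b$ precludes primal unboundedness directly, so both programs attain their optima and the equality $p^\star = d^\star$ is secured. An equivalent and equally short route would be to invoke Farkas' lemma on the infeasibility of the system $\{Mx \leq q,\ a^\top x > b\}$, which produces the nonnegative multiplier $\bm{\Lambda}^\star$ by the same logic.
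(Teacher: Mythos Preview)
Your argument is correct and is the standard route to this classical fact. Note, however, that the paper does not supply a proof of this lemma at all: it is stated in the preliminaries section as a known result and used later without justification. There is therefore nothing to compare against; your write-up simply fills in a proof the authors chose to omit.
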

% \begin{lemma}
%     Let $\Theta \triangleq \mathsf{conv}\left(\left\{\theta^i, \ \ i \in \mathbb{I}_{1}^{n_{\theta}} \right\}\right)$ and $\mathcal{P} \triangleq \mathsf{conv}\left(\left\{p^j, \ \ j \in \mathbb{I}_{1}^{n_p} \right\}\right)$ be two compact and convex sets having  $n_{\theta}, n_p$ vertices. Let $\mathcal{F}$ be a compact and convex set and let $G$ be a given matrix. 
%     Then the following two statements are equivalent:
%     \begin{subequations}
%     \begin{align}
%       & G \left[ p \otimes \theta \right] \in \mathcal{F} \quad \forall \theta \in \Theta \  \forall p \in \mathcal{P}    \label{eq:kron_all}\\
% &  G \left[ p^j \otimes \theta^i \right] \in \mathcal{F} \quad \forall \ i \in \mathbb{I}_{1}^{n_{\theta}}, \forall  j \in \mathbb{I}_{j=1}^{n_p} \label{eq:kron_vert}  
% \end{align}
% \end{subequations}
% \end{lemma}
% \begin{proof}
% It is straightforward to prove that  \eqref{eq:kron_all} $\Rightarrow$ \eqref{eq:kron_vert}, as the vertices $\theta^i \in \Theta$ and $p^j \in \mathcal{P}$. Next we prove, \eqref{eq:kron_vert} $\Rightarrow$ \eqref{eq:kron_all}. 
% \end{proof}

%For a better readability, all unknown matrix variables to be computed will be written in a boldface font, \emph{e.g.}, $\W, \K$ etc.  

\section{Problem Setting}\label{sec:prob}

\subsection{Data-generating system and constraints}
We consider the following  discrete-time LPV data-generating system
\begin{equation}\label{eq:system}
     x_{t+1}=\mathcal{A}(p_t)x_t\!+\!\mathcal{B}(p_t)u_t\!+\!w_t,
\end{equation}
where  $x_t \in \mathbb{R}^{n}$, $u_t \in \mathbb{R}^{m}$, $p_t \in \mathbb{R}^{s} $, and $w_t \in \mathbb{R}^{n}$ are the state, control input, scheduling parameter, and  (additive) disturbance vectors, at time $t$, respectively.
The  matrix functions $\mathcal{A}(p_t)$ and $\mathcal{B}(p_t)$ 
 have a linear dependency on the parameter $p_t$ as
\begin{align}
\label{eq:linear_dependence}
\mathcal{A}(p_t)=\underset{j=1}{\overset{s}{\scalemath{1}{\sum}}} p_{t,j} A_{o}^j, \quad \mathcal{B}(p_t)=\underset{j=1}{\overset{s}{\scalemath{1}{\sum}}} p_{t,j} B_{o}^j,
\end{align}
where $p_{t,j}$ denotes the $j$-th element of $p_t \in \mathbb{R}^s$ and  $A_{o}^j, B_{o}^j, \ j \in \mathbb{I}_{1}^{s} $ 
are \emph{unknown} system matrices. 
Using \eqref{eq:linear_dependence}, the LPV system~\eqref{eq:system} can be written as
\begin{align}\label{eq:system2}
    x_{t+1} = \underset{M_o}{\underbrace{\begin{bmatrix}
        A_{o}^1 \cdots&A_{o}^s&B_{o}^1 \cdots &B_{o}^s
    \end{bmatrix}}} \begin{bmatrix}
        p_t \otimes x_t \\
         p_t \otimes u_t
    \end{bmatrix} + w_t.
\end{align}
Assume that a  state-input-scheduling trajectory of $T+1$ samples $\{x_t, p_t, u_t\}_{t=1}^{T+1}$  generated from system \eqref{eq:system} is available. The generated dataset is represented by the following matrices
\begin{subequations}\label{eq:data}
\begin{align}
    X^{+} & \triangleq [x_2 \quad  x_3 \quad \cdots \quad x_{T+1}] \in \mathbb{R}^{n\times T}, \\
    {X}^{p}_{u} &\triangleq  \begin{bmatrix} p_1 \otimes x_1 \! & \! p_2 \otimes x_2 \!& \! \ldots & p_T \otimes x_T\\  p_1 \otimes u_1 \! & \! p_2 \otimes u_2 \!& \! \ldots \!& \! p_T \otimes u_T 
    \end{bmatrix} \!\! \in \mathbb{R}^{(n\!+\!m)s\times T}.
\end{align}
\end{subequations}
 Note that the state measurements $x_t$ are generated according to~\eqref{eq:system}, which are affected by  disturbance samples $w_t$ for $t \in \mathbb{I}_{1}^{T+1}$ whose values are \emph{not} known.
However, we assume that for all $t \in \mathbb{I}_1^T$,
\begin{align}\label{eq:disturbance}
    w_t \in \mathcal{W}\triangleq\left \{  w: -h_{n_w} \leq H_ww\leq h_{n_w}\right \},
\end{align}
\emph{i.e.}, the  additive disturbance $w_t$ is unknown but bounded a priori in the  0-symmetric polytope $\mathcal{W}$. Furthermore, we assume that for all $t \in \mathbb{I}_1^T$, the system parameter satisfies
\begin{align*}
    p_t \in \mathcal{P}\triangleq  \mathrm{ConvHull}(\{ p^j \}, j \in \mathbb{I}_{1}^{v_p}),
\end{align*}
where  $\{ p^j \}, j \in \mathbb{I}_{1}^{v_p}$ are $v_p$  given vertices defining the parameter set $\mathcal{P}$. Given the sets $\mathcal{W}$ and $\mathcal{P}$, our goal is to synthesize an RCI set for the LPV system~\eqref{eq:system} that satisfies the  state and input constraints
    \begin{align}\label{eq:constr}
\mathcal{X} \triangleq&\left \{  x: H_xx\leq h_{n_x} \right \}, \  
\mathcal{U} \triangleq \left \{  u: H_uu\leq h_{n_u} \right \},
\end{align}
where $\mathcal{X}$ and $\mathcal{U}$ are given polytopic sets.

\subsection{Set of feasible models}
A set of \emph{feasible models} which are compatible with the measured data $X^{+}, {X}^{p}_{u}$  and the bound on the disturbance samples captured
by the set $\mathcal{W}$, is given as follows
\begin{equation}\label{eq:feasible model set}
%\begin{split}
   \mathcal{M}_{T} \triangleq \left\{M:
x_{t+1} - M \begin{bmatrix}
p_t \otimes x_t\\ 
p_t \otimes  u_t
\end{bmatrix} \in \mathcal{W},  k \in \mathbb{I}_{1}^{T}   \right \}, 
%\end{split}
\end{equation}
where $M= \smallmat{A^1, \ldots A^{s}, \  B^1, \ldots B^{s}} \in \mathbb{R}^{n \times (n+m)s}$ are feasible model matrices. Our assumption is that the true system matrix $M_o$ in \eqref{eq:system2} belongs to this set, $M_o \in \mathcal{M}_{T}$. 
Using the definitions of data matrices in \eqref{eq:data} and disturbance set $\mathcal{W}$ in \eqref{eq:disturbance}, the feasible model set $\mathcal{M}_{T} $ is represented as,
\begin{equation}\label{eq:feasible model set 1}
%\begin{split}
   \mathcal{M}_{T}\triangleq \left \{M:   
 -h_w \leq H_w X^{+} - H_w M {X}^{p}_{u} \leq h_w  \right \},
%\end{split}
\end{equation}
with $h_w \triangleq \smallmat{h_{n_w} \ h_{n_w} \cdots \ h_{n_w}} \in \mathbb{R}^{n_w\times T}$.

We now rewrite the feasible model set $\mathcal{M}_{T}$ in \eqref{eq:feasible model set 1} using the vectorization Lemma~\ref{lemma:vectorization} for $\vv{M} \in \mathbb{R}^{n(n+m)s}$ as
\begin{align}\label{eq:Feasible model set vector}
    \scalemath{0.95}{\mathcal{M}_{T}
    \triangleq \left\{\vv{M} : -\vv{h}_w \!+\!h_M  \leq  H_M \vv{M} \! \leq \!\vv{h}_w\!+\!h_M \right\},}
\end{align}
where we define $H_M \in \mathbb{R}^{Tn_{w} \times n(n+m)s}$, $h_M \in \mathbb{R}^{Tn_w} $ and $\vv{h}_w \in \mathbb{R}^{Tn_w}$ as
%\begin{subequations}\label{eq:Zd}
\begin{align}\label{eq:Zd}
H_M \triangleq \left( {{X}^{p}_{u}}^{\top} \otimes H_w \right),  \  
  h_M \triangleq \begin{bmatrix}
      H_wx_2 \\ H_wx_3 \\ \vdots \\ H_w x_{T+1}
  \end{bmatrix}, \  
  \vv{h}_w \triangleq \begin{bmatrix}
      h_{n_w} \\ h_{n_w} \\ \vdots \\h_{n_w}
  \end{bmatrix}
\end{align}
%\end{subequations}
%Here we have used the identity~\eqref{eq:vectorize1} to rewrite the inequalities  in \eqref{eq:feasible model set 1} in a vector form. %and substituted  $Z, d$ as defined in \eqref{eq:Zd}. 
%Recall that $\vv{M}$ denotes the vectorization of model matrix $M$ obtained by stacking its column vectors. 

\begin{proposition}[Bounded feasible model set]\label{prop:rich_data}
The feasible model set $\mathcal{M}_{T}$ in \eqref{eq:Feasible model set vector} is a bounded polyhedron if and only if $\mathrm{rank}\left( {X}^{p}_{u} \right) = (n+m)s$ and $H_w$ has a full column-rank $n$~\cite[Fact 1]{bisoffi23}. 
\end{proposition}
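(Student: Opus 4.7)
The plan is to reduce the boundedness question to a rank condition on $H_M$ via the recession-cone characterization of polyhedra, and then use the Kronecker-product rank identity to split it into the two stated conditions on ${X}^{p}_{u}$ and $H_w$.

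First, I would note that $\mathcal{M}_{T}$ in \eqref{eq:Feasible model set vector} is a nonempty polyhedron: it is defined by finitely many affine inequalities, and it contains $\vv{M_o}$ since, by \eqref{eq:system2}, $x_{t+1} - M_o [p_t\otimes x_t;\, p_t\otimes u_t] = w_t \in \mathcal{W}$ for every $t\in\mathbb{I}_1^T$. The constraints have the two-sided form $-\vv{h}_w + h_M \leq H_M \vv{M} \leq \vv{h}_w + h_M$ with a compact box on the right-hand side, so the recession cone of $\mathcal{M}_T$ coincides with $\ker H_M$. A nonempty polyhedron is bounded if and only if its recession cone is $\{0\}$, and therefore the proposition reduces to showing that $H_M$ has full column rank $n(n+m)s$ iff $\mathrm{rank}({X}^{p}_{u}) = (n+m)s$ and $\mathrm{rank}(H_w) = n$.

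For necessity, I would argue by contrapositive: if there is a nonzero $v \in \ker H_M$, then $\vv{M_o} + \lambda v \in \mathcal{M}_T$ for every $\lambda \in \mathbb{R}$, so $\mathcal{M}_T$ is unbounded. For sufficiency, if $H_M$ has full column rank, it admits a left-inverse, and $\vv{M}$ is uniquely determined by $H_M \vv{M}$, which lies in a compact box; hence $\mathcal{M}_T$ is bounded. Finally, I would invoke the standard identity $\mathrm{rank}(A \otimes B) = \mathrm{rank}(A)\,\mathrm{rank}(B)$ applied to $H_M = ({X}^{p}_{u})^{\top} \otimes H_w$. Since $H_M$ has $n(n+m)s$ columns, full column rank is equivalent to $\mathrm{rank}\bigl(({X}^{p}_{u})^{\top}\bigr)\cdot \mathrm{rank}(H_w) = (n+m)s\cdot n$, and because the two factor ranks are individually bounded above by $(n+m)s$ and $n$, equality forces each factor to attain its maximum, yielding exactly the two stated conditions.

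The argument is essentially a routine linear-algebra reduction once the recession-cone viewpoint is adopted; the only mild subtlety is checking that the affine translation by $h_M$ on the right-hand side does not obscure the fact that boundedness is governed solely by $\ker H_M$. Beyond that, I do not anticipate a significant obstacle, and the proof amounts to chaining these three observations.
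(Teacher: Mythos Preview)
Your argument is correct. The paper does not supply its own proof of this proposition: it states the result with a citation to \cite[Fact~1]{bisoffi23} and, in the subsequent remark, to \cite[p.~108]{fb}, treating it as a known fact rather than something to be established in-text. Your recession-cone reduction together with the Kronecker rank identity $\mathrm{rank}(A\otimes B)=\mathrm{rank}(A)\,\mathrm{rank}(B)$ is exactly the standard route to this result and fills in the details the paper delegates to the references. The only point worth tightening slightly is the claim that the recession cone equals $\ker H_M$: this relies on each component of $\vv{h}_w$ being finite (so that the two-sided box on $H_M\vv{M}$ is genuinely bounded in every coordinate), which holds here since $h_{n_w}$ defines a polytope; you implicitly use this, and it is harmless, but making it explicit removes any residual doubt.
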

%The above proposition also relates to the ``richness" of the data and \emph{persistency of excitation} conditions for LPV systems (see, condition 1 in~\cite{verhoek22}.)
%Note that as $\mathrm{rank}\left( H_M \right) = \mathrm{rank}\left( {{X}^{p}_{u}}^{\top} \right) \mathrm{rank}\left( H_w \right)$, if ${X}^{p}_{u}$ has full  row rank and $H_w$ has  full column rank, we have $\mathrm{rank}\left( Z \right) = n(n+m)s$. %and $\mathcal{M}_{t}$ is bounded.
The full row-rank of ${X}^{p}_{u}$ can be checked from the data,  which also relates to the ``richness" of data  and \emph{persistency of excitation} condition for LPV systems ~\cite[condition 1]{verhoek22}.  If this condition is not satisfied, the set $\mathcal{M}_{T}$ is unbounded which complicates the task of finding a feasible controller and an RCI set for all $M \in \mathcal{M}_T$. 
\begin{remark}
The disturbance set $\mathcal{W}$ in \eqref{eq:disturbance} is assumed to be symmetric, which in turn allows us to check if the model set $\mathcal{M}_{T}$ is bounded  via simple rank condition on the data matrix $X^p_u$, based on~\cite[p. 108]{fb},~\cite[Fact 1]{bisoffi23}. For a general polytope $\mathcal{W}$, the conditions for $\mathcal{M}_{T}$ to be a bounded polyhedron are more involved~\cite[p. 119, ex. 11]{fb}.  
\end{remark}

 \subsection{Invariance condition}
A set $\mathcal{S} \subseteq \mathcal{X} $  is referred to as \emph{robust control invariant} (RCI) for LPV system~\eqref{eq:system2}, if for any given $p \in \mathcal{P}$, there exists a control input $u \in \mathcal{U}$ such that the following condition is satisfied:
\begin{equation}
\label{eq:Invariance Condition}
x\in\mathcal{S}  \;\Rightarrow\; x^+\in\mathcal{S},\; \forall w\in\mathcal{W}, \ \forall M \in \mathcal{M}_{T},
\end{equation}
where the time-dependence of the signals is omitted for brevity and $x^{+}$ denotes the successor state.

We now state and prove  two equivalent conditions for invariance.
Let $\{x^i, i \in \mathbb{I}_1^{v_s}\}$ be the $v_s$ vertices of the convex RCI set $\mathcal{S}$. %and %$\{p^j, j \in \mathbb{I}_1^{v_s}\}$ be the vertices of the convex set $\mathcal{P}$.  
For each vertex  $x^i, i \in \mathbb{I}_1^{v_s}$, we suppose that there exists a vertex control input $\uv^i \in \mathcal{U}, i \in \mathbb{I}_1^{v_s}$. 
\begin{lemma}
If the set $\mathcal{S}$ is robustly invariant for system~\eqref{eq:system2}, then the following two statements are equivalent:
\begin{enumerate}
    \item[$(i)$] for all $x \in \mathcal{S}$, for any given  $p \in \mathcal{P}$, and $\forall (w, M) \in (\mathcal{W}, \mathcal{M}_{T})$,
    \begin{align}
    \label{eq:RCI_condition}
    x^{+} \triangleq M \begin{bmatrix}
p \otimes x\\ 
p \otimes u
\end{bmatrix} + w \in  \mathcal{S};
\end{align}
\item[$(ii)$] for each vertex $\{x^i, \uv^i, i \in \mathbb{I}_{1}^{v_s}\}$, for each  vertex $ \{p^j, j\in \mathbb{I}_{1}^{v_p}\}$ of the set $\mathcal{P}$, and $\forall (w, M) \in \mathcal{W, M}$, \begin{align}
    \label{eq:vertex_RCI_condition}
    {x^{i,j}}^{+} \triangleq M \begin{bmatrix}
p^j \otimes x^i\\ 
p^j \otimes \uv^i
\end{bmatrix} + w \in  \mathcal{S}.
\end{align}
\end{enumerate}
\end{lemma}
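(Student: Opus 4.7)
The plan is to establish $(i) \Leftrightarrow (ii)$ as a two-way implication. The direction $(i) \Rightarrow (ii)$ is essentially immediate: each vertex $x^i$ belongs to $\mathcal{S}$, each $p^j$ belongs to $\mathcal{P}$, and $\mathbf{u}^i \in \mathcal{U}$ is admissible by hypothesis, so instantiating $(i)$ at $(x,p,u) = (x^i, p^j, \mathbf{u}^i)$ directly yields $(ii)$.

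For the reverse direction $(ii) \Rightarrow (i)$, I would exploit the bilinearity of the Kronecker product together with the convexity of $\mathcal{S}$ and $\mathcal{U}$. Given any $x \in \mathcal{S}$ and $p \in \mathcal{P}$, write them as convex combinations of vertices, $x = \sum_{i=1}^{v_s} \alpha_i x^i$ and $p = \sum_{j=1}^{v_p} \beta_j p^j$, with nonnegative multipliers satisfying $\sum_i \alpha_i = \sum_j \beta_j = 1$. Define the interpolated vertex control law $u \triangleq \sum_{i=1}^{v_s} \alpha_i \mathbf{u}^i$; by convexity of $\mathcal{U}$ and admissibility of each $\mathbf{u}^i$, it holds that $u \in \mathcal{U}$.

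The key computation is then to apply the distributivity of the Kronecker product over sums in each argument, yielding $p \otimes x = \sum_{i,j} \alpha_i \beta_j (p^j \otimes x^i)$ and $p \otimes u = \sum_{i,j} \alpha_i \beta_j (p^j \otimes \mathbf{u}^i)$. Since $\sum_{i,j} \alpha_i \beta_j = 1$, the same disturbance $w$ can be decomposed as $w = \sum_{i,j} \alpha_i \beta_j w$. Substituting into \eqref{eq:RCI_condition} and using linearity of $M$, one obtains $x^+ = \sum_{i,j} \alpha_i \beta_j (x^{i,j})^+$, where each $(x^{i,j})^+$ is defined as in \eqref{eq:vertex_RCI_condition} for the same pair $(w,M) \in \mathcal{W} \times \mathcal{M}_T$. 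Since $(ii)$ guarantees $(x^{i,j})^+ \in \mathcal{S}$ for every $(w,M)$ and the coefficients $\alpha_i \beta_j$ are nonnegative and sum to one, convexity of $\mathcal{S}$ yields $x^+ \in \mathcal{S}$.

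The main obstacle is conceptual rather than algebraic: one must be careful that the identical pair $(w,M)$ is shared across all terms in the convex combination — this is legitimate precisely because $(ii)$ is quantified uniformly over $\mathcal{W} \times \mathcal{M}_T$ — and that the Kronecker product, although non-commutative, does distribute linearly in both arguments, which is what lets the double sum factor cleanly through $M$. No additional assumption beyond convexity of $\mathcal{S}$, $\mathcal{U}$, and $\mathcal{P}$ (already implied by their polytopic representations) is needed.
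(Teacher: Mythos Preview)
Your proposal is correct and follows essentially the same approach as the paper: both directions are argued identically, with $(ii)\Rightarrow(i)$ handled by writing $x$ and $p$ as convex combinations of vertices, choosing the interpolated vertex control $u=\sum_i \alpha_i \mathbf{u}^i$, invoking distributivity of the Kronecker product, and concluding via convexity of $\mathcal{S}$. The only cosmetic difference is that the paper applies convexity in two nested steps (first over $i$, then over $j$) whereas you collapse it into a single convex combination with product weights $\alpha_i\beta_j$; this is immaterial.
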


\begin{proof}
Since for each vertex $x^i, i \in \mathbb{I}_{1}^{v_s}$ and $p^j, j \in \mathbb{I}_{1}^{v_p}$, it holds that $x^i \in \mathcal{S}$ and $p^j \in \mathcal{P}$, it can be easily seen that $(i) \Rightarrow (ii)$. Now, we prove the converse, \emph{i.e.}, $(ii) \Rightarrow (i)$.
 Any $x \in \mathcal{S}$ can be represented as a convex combination of its vertices as follows:
\begin{align}
    x = \sum_{i=1}^{v_s} \lambda_i x^i, \quad \sum_{i=1}^{v_s} \lambda_i=1, \quad \lambda_i \geq 0 , \forall \ i \in \mathbb{I}_1^{v_s}.
\end{align}
For this state, we choose the corresponding control input as
\begin{align}\label{eq:vertex_control_combination}
    u =  \sum_{i=1}^{v_s} \lambda_i \uv^i.
\end{align}
Note that, $u \in \mathcal{U}$, as $\uv^i \in \mathcal{U}$ and $\mathcal{U}$ is convex. 
Similarly, any given scheduling parameter $p \in \mathcal{P}$ can be expressed as 
\begin{align*}
    p = \sum_{j=1}^{v_p} \alpha_j p^j, \quad \sum_{j=1}^{v_p} \alpha_j=1, \quad \alpha_j \geq 0 , \forall j \in \mathbb{I}_1^{v_p}.
\end{align*}
Applying the control input  \eqref{eq:vertex_control_combination} to System \eqref{eq:system2},  for any $w \in \mathcal{W}$, we get,
\begin{subequations}
    \begin{align}
x^{+} 
&= M \begin{bmatrix}
 \left(\sum_{j=1}^{v_p} \alpha_j p^j\right) \otimes \left(\sum_{i=1}^{v_s} \lambda_i x^i\right)\\ 
\left(\sum_{j=1}^{v_p} \alpha_j p^j\right) \otimes \left(\sum_{i=1}^{v_s} \lambda_i \uv^i\right)
\end{bmatrix} + w,  \\
&= \sum_{j=1}^{v_p} \alpha_j \sum_{i=1}^{v_s} \lambda_i \underset{{x^{i,j}}^{+} \in \mathcal{S}}{\underbrace{\left( M \begin{bmatrix}
 p^j \otimes   x^i\\ 
p^j \otimes  \uv^i
\end{bmatrix} + w \right)}}, \label{eq:xij_plus} \\
&= \sum_{j=1}^{v_p} \alpha_j   \underset{{x^{j}}^{+} \in \mathcal{S}}{\underbrace{ \sum_{i=1}^{v_s} \lambda_i {x^{i,j}}^{+}}},  \label{eq:xj_plus}\\
& =\sum_{j=1}^{v_p} \alpha_j {x^{j}}^{+} \in \mathcal{S}, \label{eq:x_plus}
\end{align}
\end{subequations}
where \eqref{eq:xij_plus} follows from the distributive property of the Kronecker product.  As $\mathcal{S}$ is convex, and from \eqref{eq:vertex_RCI_condition} we know that $ {x^{i,j}}^{+} \in \mathcal{S}$, then ${x^j}^{+} \in \mathcal{S}$ in \eqref{eq:xj_plus}. Similarly, as $x^{+}$ in \eqref{eq:x_plus} is obtained as a convex combination of ${x^j}^{+} \in \mathcal{S}$, it  follows that $x^{+} \in \mathcal{S}$, thus, proving $(ii) \Rightarrow (i)$.
\end{proof}
We now formalize the problem addressed in the paper:
\begin{problem}\label{prob:Problem Formulation}
Given data matrices $(X^+, {X}^{p}_{u})$ defined in \eqref{eq:data} and the  constraint sets \eqref{eq:constr}, compute an invariant set  $\mathcal{S}$  and  associated vertex control inputs $\uv^{i} \in \mathcal{U}, \ i \in \mathbb{I}_{1}^{v_s}$ such that: $(i)$ All elements of the set $\mathcal{S}$  satisfy the state  constraints $\mathcal{S} \subseteq \mathcal{X} $;  $(ii)$ the invariance condition \eqref{eq:vertex_RCI_condition} holds. We also aim at maximizing the size of the RCI set $\mathcal{S}$. 
\end{problem}\vspace{0cm}

\section{RCI set parameterization: configuration-constrained Polytopes}\label{sec:CC_poly}

We parameterize the RCI set $\mathcal{S}$ as the following polytope
\begin{align}
\label{eq:RCI_parameterization}
    \mathcal{S}\leftarrow \mathcal{S}(\q)\triangleq \left\{ x : Cx \leq \q \right\}, \ C \in R^{n_c \times n},  
\end{align}
whose facets have a fixed orientation determined by the user-defined matrix $C$ and  offset $\q \in \mathbb{R}^{n_c}$  to be computed.
We enforce \textit{configuration-constraints} (CC)~\cite{villanueva2022configurationconstrained} over $\mathcal{S}(\q)$,  which enable us to switch between the vertex and hyperplane representation of $\mathcal{S}(\q)$ in terms of $\q$. 

\subsection*{Configuration-constraints}
Given a polytope $\mathcal{S}(\q) \triangleq\{x:Cx \leq \q \}$,  $\q \in \mathbb{R}^{n_c}$, having $v_s$ vertices, the configuration constraints over $\q$ are described by the cone
    \begin{align}
    \label{eq:config_constraint_set}
        \mathbb{S} \triangleq \{ \q: \ E \q \leq \mathbf{0}_{n_c v_s}\}
    \end{align}
with $E \in  \mathbb{R}^{n_c v_s \times n_c}$ whose construction is detailed in Appendix~\ref{sec:appendix}. 
Let $\{ V^i \in \mathbb{R}^{n \times n_c}, i \in \mathbb{I}_{1}^{v_s} \}$ be the matrices  defining the vertex maps of $\mathcal{S}(\q)$, \emph{i.e.},  $\mathcal{S}(\q)=\mathrm{ConvHull}\{V^i \q,  \ i \in \mathbb{I}_1^{v_s}\}$ for a given $\q$. Then, for a particular construction of $\{V^i,i \in \mathbb{I}_1^{v_s}, E\}$, the configuration constraints~\eqref{eq:config_constraint_set} dictate that
    \begin{align}
    \label{eq:config_constraint_relation}
        \forall \q \in \mathbb{S} && \Rightarrow && \mathcal{S}(\q)=\mathrm{ConvHull}\{V^i \q,  \ i \in \mathbb{I}_1^{v_s}\}.
    \end{align}
%The matrices $\{V^i, i \in \mathbb{I}_{1}^{v_s}, E\}$ are then used to enforce  that RCI set $\mathcal{S}(\q)$ in \eqref{eq:RCI_parameterization} is a configuration-constrained polytope.   
For a user-specified matrix $C$ parameterizing $\mathcal{S}(\q)$ in \eqref{eq:RCI_parameterization}, we assume we are given  matrices $\{V^i, i \in \mathbb{I}_{1}^{v_s}, E\}$ satisfying~\eqref{eq:config_constraint_relation}. Such matrices  are then used to enforce  that the RCI set $\mathcal{S}(\q)$  is a CC-polytope.   
For further details regarding their  constructions, we refer the reader to Appendix~\ref{sec:appendix}.

    \begin{remark}
The choice of $C \in \mathbb{R}^{n_c \times n}$ acts as a trade-off between representational complexity of the set $\mathcal{S}(\q)$ \emph{vs} the conservativeness of the proposed approach. %In this work, we assume that $C$ is given such that $\mathcal{S}(q)$ is an \emph{entirely simple} polyhedron for a chosen $q$, see Definition~\ref{def:entirely_simple} in Appendix~\ref{sec:appendix}. %A thorough analysis on the effect of suitable choice of $C$ is left for future investigation.  
\end{remark}

\section{computation of RCI set  and invariance-inducing controller}\label{sec:data-driven invariance}
In this section, we enforce that the set $\mathcal{S}(\q)$ is RCI under vertex control inputs induced by $\uv^i, i \in \mathbb{I}_{1}^{v_s}$. We recall that a particular construction of matrices  $\{V^i \in \mathbb{R}^{n \times n_c}, i \in \mathbb{I}_{1}^{v_s}, E \in \mathbb{R}^{n_cv_s\times n_c}\}$ satisfying~\eqref{eq:config_constraint_relation} is given.  

We enforce that $\mathcal{S}(\q)$ is a configuration-constrained polytope through the following constraints
\begin{align}
\label{eq:basic_config_con_poly}
    E\q \leq \mathbf{0}.
\end{align}

\subsection{System constraints}
 Let us enforce  the inclusion $\mathcal{S} \subseteq \mathcal{X}$ and input constraints $\uv^i \in \mathcal{U}$. Note that  from~\eqref{eq:config_constraint_relation},  under the constraint \eqref{eq:basic_config_con_poly}, we have the following vertex map of $\mathcal{S}(\q)$, 
 \begin{equation}\label{eq:CC_vertex}
     \mathcal{S}(\q)=\mathrm{ConvHull}\{V^i\q, \ i \in \mathbb{I}_1^{v_s}\}
 \end{equation}
We now enforce the state and input constraints in~\eqref{eq:constr} in terms of $\q$ and $\uv^i$ as follows
 %We assume to know the vertices $\{\bar{p}_l, l \in \mathbb{I}_1^{v_p}\}$ of the set $\mathcal{P}$, using which we enforce the state and input constraints as
\begin{align}
\label{eq:state_input_incl}
    H_xV^i \q \leq h_{n_x}, &&  H_u \uv^i \leq h_{n_u},  && \forall i \in \mathbb{I}_1^{v_s}.
\end{align}
%The constraints in \eqref{eq:state_input_incl} are $n_x \times v_s$ and $n_u \times v_s$ linear inequalities in $(\q,\uv^i)$. %If the vertices $\{\bar{p}_l\}$ are not known, then these can still be encoded as linear inequalities using LP duality using the hyperplane notation of $\mathcal{P}$.

\subsection{Invariance condition}

We now enforce the invariance condition ${x^{i,j}}^{+} \in \mathcal{S}(\q)$ in~\eqref{eq:vertex_RCI_condition} for all $w \in \mathcal{W}$ and for all feasible models in the set $M \in \mathcal{M}_{T}$. Recall that condition~\eqref{eq:vertex_RCI_condition} is enforced at each vertex $\{x^i, \uv^i, i \in \mathbb{I}_{1}^{v_s}\}$ and $\{p^j, j\in \mathbb{I}_{1}^{v_p}\}$ of the set $\mathcal{P}$.

Note that from~\eqref{eq:CC_vertex}, the vertices of $\mathcal{S}(\q)$ are $\{x^i \triangleq V_i \q, \ i \in \mathbb{I}_1^{v_s}\}$, under the  constraints in~\eqref{eq:basic_config_con_poly}. Then, the successor state from ${x^{i,j}}^{+}$ for parameter $p^j$, input $\uv^i$, and disturbance $w$ is given in terms of $\q$ as follows
\begin{align}\label{eq:next_state_vertex}
     {x^{i,j}}^{+} = M \begin{bmatrix}
p^j \otimes V^i \q\\ 
p^j \otimes \uv^i
\end{bmatrix} + w.
\end{align} 
Thus, the inclusion in~\eqref{eq:vertex_RCI_condition} is enforced by the inequality
\begin{align}\label{eq:state_incl_tight}
&C {x^{i,j}}^{+} \leq \q- d  \quad \forall i \in \mathbb{I}_{1}^{v_s}, \ \forall j \in \mathbb{I}_{1}^{v_p}, \ \forall M \in \mathcal{M}_{T},
\end{align}
where $d \triangleq \max\{C w : w \in \mathcal{W}\}$ tightens the set $\mathcal{S}(\q)$ by the disturbance set $\mathcal{W}$.

Using vectorization  in~\eqref{eq:vectorize1}, and substituting \eqref{eq:next_state_vertex},  the inequality~\eqref{eq:state_incl_tight} can be written as follows
\begin{align}\label{eq:state_incl_vec}
   & C \left(\left(\begin{bmatrix}
p^j \otimes V_i \q \\ 
p^j \otimes \uv^i
\end{bmatrix}\right)^{\top} \! \otimes \! I_n \right)\Vec{M} \! \leq \!  \q\!-\!d, \nonumber  \\ 
& \quad \quad \forall \Vec{M} \in \mathcal{M}_{T} \triangleq \{\Vec{M}: H_M \Vec{M} \leq h_M\},  
\end{align}
where we define $\bar{H}_M = \begin{bmatrix}
    H_M \\ -H_M
\end{bmatrix}$ and $\bar{h}_M = \begin{bmatrix}
    \vv{h}_w\!+\!h_M \\ \vv{h}_w\!-\!h_M
\end{bmatrix}$ with $H_M, h_M, \vv{h}_w$  defined as in \eqref{eq:Zd}. 

Using strong duality (Lemma~\ref{prop:strong_duality}), the invariance condition~\eqref{eq:state_incl_vec} holds if and only if
 there exists some  multipliers $\bm{\bm{\Lambda}}^{ij} \in \mathbb{R}^{n_c \times 2Tn_w}_+ $  for all $i \in \mathbb{I}_{1}^{v_s}$, $j \in \mathbb{I}_{1}^{v_p}$ satisfying
 \begin{subequations}\label{eq:invariance_LP}
     \begin{align}
     &\bm{\Lambda}^{ij} \bar{h}_M \leq \q- d, \\
    & \bm{\Lambda}^{ij} \bar{H}_M = C \left(\left(\begin{bmatrix}
p^j \otimes V_i \q \\ 
p^j \otimes \uv^i
\end{bmatrix}\right)^{\top} \otimes I_n \right).
 \end{align}
 \end{subequations}

 \subsection{Maximizing the size of the RCI set}

 We characterize the size of the RCI set $\mathcal{S} \subseteq \mathcal{X}$ as
 \begin{align}
    \label{eq:distance_metric}
    \mathrm{d}_{\mathcal{X}}(\mathcal{S}):=\min_{\epsilon}\{\norm{\epsilon}_1 \ \ \mathrm{s.t.} \ \ \mathcal{X} \subseteq \mathcal{S} \oplus \mathcal{D}(\epsilon)\},
\end{align}
where $\mathcal{D}(\epsilon) \triangleq \{x : Dx \leq \epsilon\}$ is a polytope having user-specified normal vectors $\{D^{\top}_i,i \in \mathbb{I}_1^{m_d}\}$. Thus, we want to compute a desirably large RCI set $\mathcal{S}$ by minimizing the `distance' $\mathrm{d}_{\mathcal{X}}(\mathcal{S})$ in \eqref{eq:distance_metric}. The user-specified matrix $D$ allows us to maximize the size of the set $\mathcal{S}$ in the direction of interest.  

Let $\{\mathrm{y}^l, l \in \mathbb{I}_{1}^{v_x} \}$ be the known vertices of the state-constraint set $\mathcal{X}$, \emph{i.e.}, $\mathcal{X} = \mathrm{ConvHull}\{\mathrm{y}^l, l \in \mathbb{I}_{1}^{v_x} \}$. For each vertex $\mathrm{y}^l$ of $\mathcal{X}$, let $\mathbf{z}^l \in \mathcal{D}(\epsilon)$ and $\mathbf{s}^l \in \mathcal{S}$ for $l \in \mathbb{I}_{1}^{v_x}$ be the corresponding points in the sets $\mathcal{D}$ and $\mathcal{S}$. Then, the inclusion $\mathcal{X} \subseteq \mathcal{S} \oplus \mathcal{D}(\epsilon)$ in \eqref{eq:distance_metric} is equivalent to~\cite{Schneider2013},
\begin{align}
\label{eq:approx_constraints}
\begin{matrix}
\forall  l \in \mathbb{I}_1^{v_x}, \ \exists   \{\mathbf{z}^{l}, \mathbf{s}^{l}\}:   \mathrm{y}^l= \mathbf{z}^{l}+ \mathbf{s}^{l}, \
D\mathbf{z}^l \leq \epsilon, \ C \mathbf{s}^l \leq \q 
\end{matrix}
\end{align}
We now consider the following LP problem which aims at computing the RCI set parameter $\q$ and invariance inducing vertex control inputs $\{\uv^i, i \in \mathbb{I}_{1}^{v_s}\}$ for the LPV system \eqref{eq:system}. Our goal is  to maximize the size of the RCI set $\mathcal{S}(\q)$ (or equivalently, to minimize \eqref{eq:distance_metric}), while satisfying the system constraints, the invariance condition, and the configuration constraints, for all $i \in \mathbb{I}_{1}^{v_s}$, $j \in \mathbb{I}_{1}^{v_p}$ and $l \in \mathbb{I}_{1}^{v_x}$:
 \begin{equation}\label{eq:vol_max_LP}
\begin{array}{lll}
\quad \quad \min & \norm{\epsilon}_1 & \\
{\{\q, \uv^i, \bm{\Lambda}^{ij}, \mathbf{z}^l, \mathbf{s}^l, \epsilon \}} & & \\
\text{subject to:} & \eqref{eq:basic_config_con_poly} & (\text{configuration constraints}), \\
&  
 \eqref{eq:state_input_incl}   &  (\text{state-input constraints}), \\
 &  \eqref{eq:invariance_LP} \;  & (\text{invariance condition}), \\
 & \eqref{eq:approx_constraints} & (\text{volume maximization}). 
 \end{array}
\end{equation}

In terms of computational complexity, the LP in \eqref{eq:vol_max_LP} consists of $n_c v_s$ linear inequalities for expressing the configuration constraints~\eqref{eq:basic_config_con_poly}, $(n_x+n_u)v_s$ linear inequalities for system constraints~\eqref{eq:state_input_incl}, $2n_cv_sv_p$ linear   inequality-equality constraints for invariance \eqref{eq:invariance_LP}, and $v_x(m_d+n_c+n)$ linear equality-inequality constraints for volume maximization. The number of  optimization variables is $(n_c+v_s(m+2v_pn_cTn_w)+2nv_x+m_d)$.   %The computational complexity is mainly impacted by the chosen representational complexity of the RCI set $n_c$, $v_s$, number of vertices of the scheduling set $n_p$ and number of data samples $T$. 

\subsection{Invariance-inducing controller}

The vertex control inputs $\{\uv^i, i \in \mathbb{I}_{1}^{v_s} \} \in \mathcal{U}$  obtained by solving the LP~\eqref{eq:vol_max_LP} correspond to the vertices $\{x^i , i \in \mathbb{I}_{1}^{v_s} \}$ of the RCI set $\mathcal{S}$. 
Then, for any $x_t \in \mathcal{S}$, an admissible control input $u_t$
can be obtained as follows,
\begin{equation}\label{eq:vertex_control}
    u_t = \sum_{i=1}^{v_s} \lambda^{i,\star}_t \uv^i,
\end{equation}
where $\{\lambda^{i,\star}_t, \ i \in \mathbb{I}_{1}^{v_s}\}$ are computed by solving the following LP:
\begin{equation}\label{eq:control_LP}
\begin{array}{lll}
\{ \lambda^{i,\star}_t \} = &\arg \min  \sum_{i=1}^{v_s} \lambda^{i}_t  & \\
 &\{\lambda^i_t\} & \\
\text{subject to:} & \sum_{i=1}^{v_s} \lambda^i_t x^i = x_t, & 0 \leq \lambda^i_t \leq 1.\\
  \end{array}
\end{equation}
The LP problem~\eqref{eq:vertex_control} is solved at each time step $t$ upon the availability of the new state measurement $x_t$.

%\subsection{Computational complexity}

\section{Numerical examples}\label{sec:example}

We demonstrate the effectiveness of the proposed approach via two numerical examples. All computations are carried out
on an i7 1.9-GHz Intel core processor with 32 GB of RAM
running MATLAB R2022a.

\subsection{Example 1: LPV Double integrator}
\label{sec:Example1}
We consider the following  LPV double integrator data-generating system~\cite{gmfp23auto},
\begin{align}\label{eq:example1}
    x_{t+1}=\begin{bmatrix} 1+\delta_t & 1+\delta_t \\ 0 & 1+\delta_t \end{bmatrix} x_t+\begin{bmatrix} 0 \\ 1+\delta_t \end{bmatrix} u_t+w_t,
\end{align}
where $|\delta_t|\leq 0.25$, with constraints $\mathcal{X} \triangleq \{x: |x| <=[5\ 5]^{\top} \}$, $\mathcal{U} \triangleq \{u: |u| \leq 1 \} $, and $\mathcal{W}\triangleq \{w:|w| \leq [0.25\ 0]^{\top}\}$.
This system can be brought to the LPV form~\eqref{eq:system} with 
\begin{align}\label{eq:DI_matrices}
\scalemath{0.95}{
    A^1=\begin{bmatrix} 1.25 & 1.25 \\ 0 & 1.25 \end{bmatrix}, A^2=\begin{bmatrix} 0.75 & 0.75 \\ 0 & 0.75 \end{bmatrix}, \begin{matrix} B^1=\begin{bmatrix} 0 & 1.25 \end{bmatrix}^{\top}, \\ B^2=\begin{bmatrix} 0 & 0.75 \end{bmatrix}^{\top}, \end{matrix}}
\end{align}
using  $p_{t,1}=2(0.25+\delta_t)$, $p_{t,2}=2(0.25-\delta_t)$. This corresponds to the simplex scheduling-parameter set $\mathcal{P} = \{p \in \mathbb{R}^2:p \in [0,1], p_1+p_2=1\} = \mathrm{ConvHull}(\smallmat{1\\0}, \smallmat{0\\1})$.
The system matrices in \eqref{eq:DI_matrices} are \emph{unknown} and only used to gather the data. 
A single state-input-scheduling trajectory of $T=100$ samples is gathered by exciting  system~\eqref{eq:example1} with inputs uniformly distributed in $ [-1, 1]$. %, see Fig.~\ref{fig:data}. 
The data satisfies the rank conditions given in Proposition~\ref{prop:rich_data}, \emph{i.e}, $\mathrm{rank}(X^p_u) = (n+m)s= 6$.

We choose matrix $C$ defining an RCI set with representational complexity given by $n_c =50$, \emph{i.e.},
$C \in \R^{50 \times 2}$, such that $\mathcal{S}(\mathbf{1}_{50})$ is an entirely simple polytope. In particular, each row of $C$ is chosen as follows~\cite[Remark 3]{villanueva2022configurationconstrained}
\begin{equation}\label{eq:C_row}
    C^{i} = \left[\cos\left(\frac{2 \pi (i-1)}{n_c}\right), \  \sin\left(\frac{2 \pi (i-1)}{n_c}\right)\right], i \in \mathbb{I}_{1}^{n_c}.
\end{equation}
Based on the selected $C$, we build $\{V^i, i\in \mathbb{I}_1^{50}\}$, and $E$ satisfying the configuration constraints in \eqref{eq:config_constraint_relation}. We refer the reader to Appendix~\ref{sec:appendix} for the details on the construction of $\{V^i, i \in \mathbb{I}_1^{v_s}\}$, and $E$.
We set $D=C$ defining the distance in~\eqref{eq:distance_metric}.  

\begin{figure}[t!]
\centering
    \includegraphics[width=0.9\columnwidth]{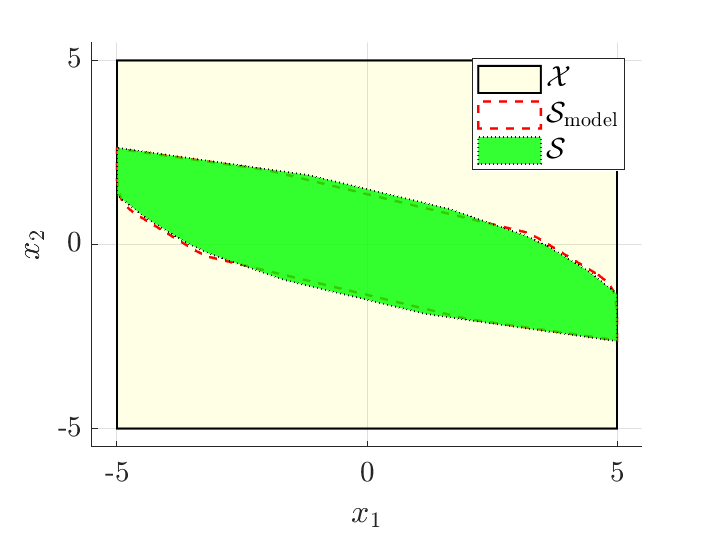}
    \vspace{-0.3cm}
    \caption{Example 1: State constraint set $\mathcal{X}$ (yellow), model-based CC-RCI set $\mathcal{S}_{\mathrm{model}}$ (dashed-red),  data-based CC-RCI set $\mathcal{S}$ (green).}
    \label{fig:rci_plot}
\end{figure}

The RCI set $\mathcal{S}(\q)$ obtained by solving the LP problem~\eqref{eq:vol_max_LP} is shown in Fig.~\ref{fig:rci_plot}.  The total construction and solution time  is $40.6$ s. We compare the proposed approach to a model-based method, where we compute a CC-RCI set $\mathcal{S}_{\mathrm{model}}$ using the knowledge of the true system matrices. In particular, we fix  the model matrix $M$ in  \eqref{eq:vertex_RCI_condition} to the true system matrices $M= \smallmat{A^1, A^2, B^1, B^2}$ given in \eqref{eq:DI_matrices}, and compute $\mathcal{S}_{\mathrm{model}}$ solving an LP minimizing $\mathrm{d}_{\mathcal{X}}(\mathcal{S}_{\mathrm{model}})$. In the model-based case, invariance constraints~\eqref{eq:state_incl_tight} are  directly computed for a given fixed $M$.  
The volume of the RCI set $\mathcal{S}$ obtained with the proposed data-driven proposed algorithm is $25.43$, 
while that provided by the model-based method is $24.56$,
 which shows that the proposed data-based approach generates RCI sets that are of comparable size to those of model-based method.

\begin{figure}[t!]
	\centering
%\begin{subfigure}{\columnwidth}
\centering
\includegraphics[width=0.5\columnwidth]{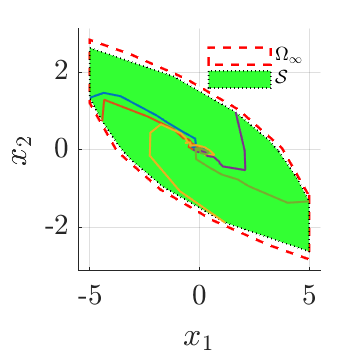}
	%\caption{RCI cl }
	%\label{fig:rci_data}
%\end{subfigure}
%\begin{subfigure}{\columnwidth}
%\centering
\includegraphics[width=0.45\columnwidth]{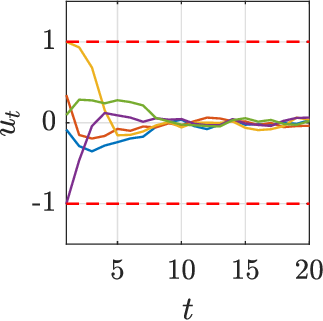}
	%\caption{Control}
	%\label{fig:rci_data_dil}
%\end{subfigure}
\caption{Example 1:  Left panel: CC-RCI set $\mathcal{S}$ (green) with closed-loop state trajectories and MRCI $\Omega_{\infty}$ (dashed-red); Right panel: corresponding control input trajectories and input constraints (dashed red).}
\label{fig:closed_loop_eg1}
\end{figure}

Fig.~\ref{fig:closed_loop_eg1}   depicts closed-loop state trajectories starting from some of the vertices of the RCI set (left panel), and corresponding control input trajectories (right panel).  The maximal RCI (MRCI) set $\Omega_{\infty}$ computed using a model-based geometric approach~\cite[Algorithm 10.5]{bbm17}  is also plotted. 
The state trajectories are obtained by simulating the true system~\eqref{eq:example1} in closed-loop with the invariance inducing controller $u_t$ in \eqref{eq:vertex_control} computed by solving the LP \eqref{eq:control_LP} at each time instance. Note that for each closed-loop simulation, a different realization of the scheduling signal $p$ taking values in the given interval $ [0, 1]$ is generated. %We remark  that these scheduling signals are \emph{not} used in the training data employed to compute the RCI set. 
%{\color{blue} Sampath : Didnt quite understand what you mean by 'these scheduling signals are \emph{not} used in the training data employed to compute the RCI set'}
Moreover, during each closed-loop simulation,  different realizations of the disturbance signal $w_t \in \mathcal{W}$  are acting on the system. The result shows that the approach guarantees robust invariance w.r.t. all possible scheduling signals taking values in a given set as well as in the presence of a bounded but unknown disturbance,  while respecting the state-constraints.  The corresponding input trajectories  shown in 
Fig.~\ref{fig:closed_loop_eg1} (right panel)  show that the input constraints are also satisfied.
\begin{table}[]
    \centering
\begin{tabular}{ |c|c|c|c||c||c| } 
 \hline
 $T$  & $30$ & $50$ & $100$ & Model-based &  $\Omega_{\infty}$  \\ 
 \hline
  volume &  22.23 & 24.47 & 25.43 &  24.56 &28.19\\
  \hline
 $\mathrm{d}_{\mathcal{X}}(\mathcal{S}(\q))$ & 168.31 & 166.15 & 164.68 & 162.11&-\\
  \hline
\end{tabular}
    \caption{Example $1$: Size of the RCI set \emph{vs} number of data samples $T$.}
    \label{tab:ex1_vol_vs_T}
\end{table}

Lastly, we analyse the effect of the number $T$ of data samples  on the size of the RCI set. The volume of the RCI set  and the LP objective $\mathrm{d}_{\mathcal{X}}(\mathcal{S}(\q))$ for varying $T=30,50,100$ are reported in Table~\ref{tab:ex1_vol_vs_T}. As $T$ increases, the feasible model set $\mathcal{M}_{T}$ shrinks progressively, $\mathcal{M}_{T+1}\subseteq \mathcal{M}_{T}$, thus constraint $\forall M \in \mathcal{M}_{T}$ is less restrictive, resulting in an increased size of the RCI set.

\subsection{Example 2:  Van der Pol oscillator embedded as LPV}
\label{sec:Example2}

We consider the Euler-discretized LPV representation of the Van der Pol oscillator system~\cite{mmb23} as data-generating system in the form \eqref{eq:system} with 
% \begin{align}\label{eq:example2}
% x_{t+1}\!=\!\begin{bmatrix} 1 & T_s \\ -T_s & 1+\mu T_s(1-x_{t,1}^2) ) \end{bmatrix} x_t\!+\!\begin{bmatrix} 0 \\ T_s \end{bmatrix} u_t+ w_t,
% \end{align}
  %An equivalent LPV representation~\eqref{eq:system} for this system is given by,
\begin{align}\label{eq:example2}
   \left[ \begin{array}{c|c}
   A^1 &
   A^2 
\end{array}\right]\!=\!\left[ \begin{array}{c|c}
   \begin{matrix} 1 & T_s \\ -T_s & 1 \end{matrix} &     \begin{matrix} 1 & T_s \\ -T_s & 2 \end{matrix} 
\end{array}\right],  B^{1,2}=\begin{bmatrix} 0 \\ T_s \end{bmatrix}, 
\end{align}
where $T_s=0.1$ is the sampling time.
The scheduling parameters are chosen as $p_{t,1}=1-\mu T_s(1-x_{t,1}^2)$ with $\mu=2$ and $p_{t,2}=1-p_{t,1}$. The system constraints are $\mathcal{X}\triangleq \{x:\norm{x}_{\infty} \leq 1\}$,  $\mathcal{U} \triangleq \{u:|u|\leq 1\}$ and $\mathcal{W}\triangleq \{w:|w| \leq [10^{-3}\ 10^{-3}]^{\top}\}$. 
The scheduling parameter set is $\mathcal{P} \triangleq \{p:p_1 \in [1-\mu T_s,1],p_2 \in [0,1],p_1+p_2=1\}= \mathrm{ConvHull}\left(\smallmat{1\\0}, \smallmat{1-\mu T_s\\\mu T_s}\right)$.
The system matrices $\{A^1, A^2, B\}$ are \emph{unknown} and only used to gather the data. 
A single state-input-scheduling trajectory of $T=100$ samples is gathered by exciting  system~\eqref{eq:example2} with inputs uniformly distributed in $ [-1, 1]$. %, see Fig.~\ref{fig:data}. 
The data satisfy the rank conditions given in Proposition~\ref{prop:rich_data}, \emph{i.e}, $\mathrm{rank}(X^p_u) = (n+m)s= 6$.

 \begin{figure}[t!]
\centering
    \includegraphics[width=0.9\columnwidth]{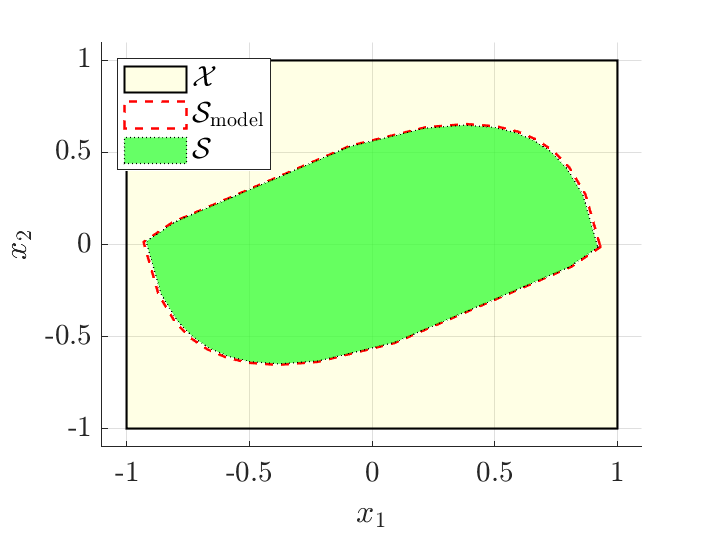}
    \vspace{-0.3cm}
    \caption{Example 2: State constraint set $\mathcal{X}$ (yellow), model-based CC-RCI $\mathcal{S}_{\mathrm{model}}$ (dashed-red), proposed data-driven CC-RCI set $\mathcal{S}$ (green). Note that $\mathcal{S}_{\mathrm{model}}$ and $\mathcal{S}$ are nearly overlapping.}
    \label{fig:rci_plot_vdp}
\end{figure}

\begin{figure}[t!]
	\centering
%\begin{subfigure}{\columnwidth}
\centering
\includegraphics[width=0.5\columnwidth]{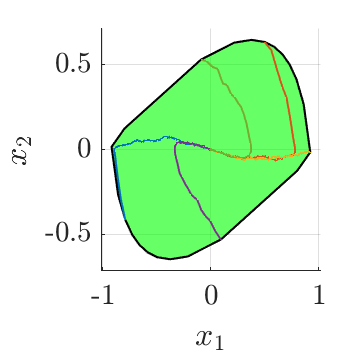}
	%\caption{RCI cl }
	%\label{fig:rci_data}
%\end{subfigure}
%\begin{subfigure}{\columnwidth}
%\centering
\includegraphics[width=0.45\columnwidth]{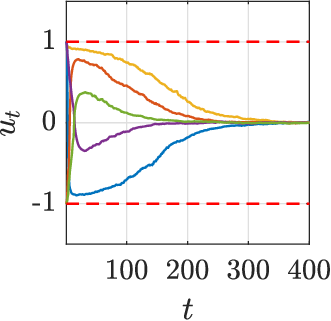}
	%\caption{Control}
	%\label{fig:rci_data_dil}
%\end{subfigure}
\caption{Example 2:  Left panel: CC-RCI set $\mathcal{S}$ with closed-loop state trajectories; Right panel: Corresponding control  trajectories and input constraints (dashed red).}
\label{fig:closed_loop_vdp}
\end{figure}

The matrix $C$ parameterzing RCI set is selected with $n_c=30$. Each row of $C \in \R^{30 \times 2}$ is set according to \eqref{eq:C_row}, such that $\mathcal{S}(\mathbf{1}_{30})$ is an entirely simple polytope. Based on the chosen $C$, we build the matrices $\{V^i, i\in \mathbb{I}_1^{30}\}$, and$E$ satisfying the configuration constraints in \eqref{eq:config_constraint_relation}. 
We set $D=C$ defining the distance in~\eqref{eq:distance_metric}.
The RCI set $\mathcal{S}(\q)$ obtained by solving the LP problem~\eqref{eq:vol_max_LP} is shown in Fig.~\ref{fig:rci_plot_vdp}.  The total construction and solution time  is $21.5$ s. 
 For comparision, as in Example $1$, we also compute the CC-RCI set $\mathcal{S}_{\mathrm{model}}$ with the model-based approach using the knowledge of the true system matrices given in \eqref{eq:example2}.
The volume of the RCI set $\mathcal{S}$ with the proposed data-driven algorithm is $1.59$, while that of  $\mathcal{S}_{\mathrm{model}}$ is $1.62$, which shows that the proposed method is able to generate RCI sets that are of similar size to those of the model-based method. %We remark that  the geometric approach failed to generate an MRCI set $\Omega_{\infty}$ within a termination time of $4$ hrs.   

 Fig.~\ref{fig:closed_loop_vdp}  shows closed-loop state trajectories starting from the vertices of the RCI set (left panel)  for different realizations of the scheduling and disturbance signals during closed-loop simulation. The corresponding invariance-inducing control inputs \eqref{eq:vertex_control} are depicted Fig.~\ref{fig:closed_loop_vdp} (right panel), obtained by solving \eqref{eq:control_LP}, which satisfy the input constraints.

 \begin{table}[ht]
    \centering
\begin{tabular}{ |c|c|c|c||c| } 
 \hline
 $T$  & $20$ & $50$ & $100$ &  $\mathcal{S}_{\mathrm{model}}$  \\ 
 \hline
  volume &  1.50 & 1.56 & 1.59 & 1.62\\
  \hline
 $\mathrm{d}_{\mathcal{X}}(\mathcal{S}(\q))$ & 19.04 & 18.81 & 18..67 & 18.54\\
  \hline
\end{tabular}
    \caption{Example $2$: Size of the RCI set \emph{vs} number of data samples $T$.}
    \label{tab:ex2_vol_vs_T}
\end{table}
Finally, the volume of the RCI set $\mathrm{vol}(\mathcal{S}(\q))$ and the LP objective $\mathrm{d}_{\mathcal{X}}(\mathcal{S}(\q))$ for varying $T=20,50,100$ are reported in Table~\ref{tab:ex2_vol_vs_T}. As $T$ increases, the feasible model set $\mathcal{M}_{T}$ becomes smaller,  resulting in an increased size of the RCI set.
 
 \section{CONCLUSIONS}%\vspace{-.1cm}
 The paper proposed a  data-driven approach  to compute a polytopic CC-RCI set and a corresponding vertex control laws for LPV systems.  The RCI set is parameterized as a configuration-constrained polytope which enables traversing between vertex and hyperplane representation.
 A data-based invariance condition was proposed which utilizes a single state-input-scheduling trajectory without requiring to identify an LPV model of the system. The RCI sets are computed by solving a single LP problem.
The effectiveness of the proposed algorithm was shown via two numerical examples to generate RCI sets from a `small' number of collected data samples. As  future work, we  consider synthesizing parameter-dependent RCI sets for LPV systems in a data-driven setting.

\bibliographystyle{plain}
\bibliography{cc_rci}

\section{Appendix: Configuration-Constrained Polytopes}\label{sec:appendix}
We summarize the main results  from~\cite{villanueva2022configurationconstrained} used in this paper. Let  $\mathcal{S}(\q) \triangleq \{ x\in \mathbb{R}^n:Cx \leq \q\}, \ \q \in \mathbb{R}^{n_c}$. We assume that $\q$ is such that $\mathcal{S}(\q) \neq \emptyset$.
Let  $\mathcal{I}\triangleq \{i_1,\cdots,i_{|\mathcal{I}|}\} \subseteq \mathbb{I}_1^{n_c}$ be the index set
 based on which we define matrices $C_\mathcal{I} \triangleq [C_{i_1}^{\top} \cdots C_{i_{|\mathcal{I}|}}]^{\top} \in \R^{|\mathcal{I}| \times n} $ and ${\q_\mathcal{I} \triangleq [\q_{i_1} \cdots \q_{i_{|\mathcal{I}|}}]^{\top}  \in \R^{|\mathcal{I}|}}$ 
by collecting the rows of  matrix $C$ and elements of vector $\q$ corresponding to the indices in set $\mathcal{I}$. The face of $\mathcal{S}(\q)$ associated with the set $I$ is defined as 
%\begin{equation*}
    $\mathcal{F}_\mathcal{I}(\q)\triangleq \{x \in \mathcal{S}(\q): \ C_\mathcal{I} x \geq \q_\mathcal{I}\}$.
%\end{equation*}
\begin{definition}\label{def:entirely_simple}
 A polytope $\mathcal{S}(\q)$ is entirely simple if for all index sets $\mathcal{I}$ such that the corresponding face is nonempty, \emph{i.e.}, $\mathcal{F}_\mathcal{I}(\q) \neq \emptyset$, the condition $\mathrm{rank}(C_\mathcal{I})=|\mathcal{I}|$ holds.  $\hfill\square$
 \end{definition}
 
 For some \emph{given} vector $\sigma \in \R^{n_c}$, suppose that $\mathcal{S}(\sigma)$ is an entirely simple polytope. 
Then, the set of all $n$-dimensional index sets with corresponding faces being nonempty is defined as $\mathcal{V} \triangleq \{ \mathcal{I} : |\mathcal{I}|=n, \ \mathcal{F}_\mathcal{I}(\sigma) \neq \emptyset\}.$
The set $\mathcal{V}$ is the index collection associated with the vertices of $\mathcal{S}(\sigma)$. Let $|\mathcal{V}|=v_s$, \emph{i.e.}, $\mathcal{V}=\{\mathcal{V}_1,\cdots,\mathcal{V}_{v_s}\}$ with $|\mathcal{V}_k|=n$ for each $k \in \mathbb{I}_1^{v_s}$. Then, according to the definition of entirely simple polytopes, $\mathrm{rank}(C_{\mathcal{V}_k})=n$, such that $C_{\mathcal{V}_k}$ is invertible. Let $\I^{n_c}_{\mathcal{V}_k} \in \R^{n \times {n_c}}$ be the matrix constructed using rows of identity matrix $I_{n_c}$ corresponding to indices in $\mathcal{V}_k$. Then, defining the matrices $V^{k}:=C^{-1}_{\mathcal{V}_k} \I^{n_c}_{\mathcal{V}_k} \in \R^{n \times n_c},$ we note that $\left\{V^1 \sigma, \cdots , V^{v_s}\sigma\right\} \in \mathcal{S}(\sigma)$
are the $v_s$ vertices of $\mathcal{S}(\sigma)$. Using matrices $\{V^k, k \in \mathbb{I}_1^{v_s}\}$, define the cone
\begin{align*}
    \mathbb{S}\triangleq \left\{ \q: \underset{E}{\underbrace{\begin{bmatrix} CV^1-I_{n_c} \\ \vdots \\ CV^{v_s}-I_{n_c} \end{bmatrix}}} \q \leq \mathbf{0}\right\},
\end{align*}
which was described in~\eqref{eq:config_constraint_set}. The following result is the basis for the relationship in~\eqref{eq:config_constraint_relation}.
\begin{proposition}{\cite[Theorem 2]{villanueva2022configurationconstrained}}
\label{prop:CC_main}
Suppose that $\mathcal{S}(\sigma)$ is an entirely simple polytope, based on which the vertex mapping matrices $\{V^k, k \in \mathbb{I}_1^{v_s}\}$ and a matrix $E$ defining the cone $\mathbb{S}$ are constructed as discussed above. Then, $\mathcal{S}(\q) = \mathrm{ConvHull}\{V^k \q, \ k \in \mathbb{I}_1^{v_s}\}$ for all $\q \in \mathbb{S} \triangleq \{\q: E\q \leq \mathbf{0}\}$. 
\end{proposition}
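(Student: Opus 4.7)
The plan is to prove the identity by double inclusion. The inclusion $\mathrm{ConvHull}\{V^k \q : k \in \mathbb{I}_1^{v_s}\} \subseteq \mathcal{S}(\q)$ is the easy direction: the cone constraint $E\q \leq \mathbf{0}$ decomposes blockwise into $(CV^k - I_{n_c})\q \leq \mathbf{0}$, i.e., $CV^k \q \leq \q$ for each $k$, which is precisely the membership $V^k \q \in \mathcal{S}(\q)$; convexity of $\mathcal{S}(\q)$ then closes this direction. As a byproduct, $\mathcal{S}(\q) \neq \emptyset$, and since its recession cone $\{d : Cd \leq 0\}$ is independent of $\q$ and is trivial by boundedness of $\mathcal{S}(\sigma)$, the polyhedron $\mathcal{S}(\q)$ is in fact a bounded polytope, which is what will let me invoke Minkowski's theorem below.

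For the reverse inclusion, it suffices (by Minkowski) to show that every vertex of $\mathcal{S}(\q)$ belongs to $\{V^k \q : k \in \mathbb{I}_1^{v_s}\}$. First, each $V^k \q$ is indeed a vertex: since $V^k = C^{-1}_{\mathcal{V}_k} \I^{n_c}_{\mathcal{V}_k}$, one has $C_{\mathcal{V}_k} V^k \q = \q_{\mathcal{V}_k}$, so the $n$ rows indexed by $\mathcal{V}_k$ are active, and $\mathrm{rank}(C_{\mathcal{V}_k}) = n$ by entire simplicity pins this point uniquely. The crux is thus the converse: to rule out extraneous vertices, i.e., to show that for every index set $\mathcal{J}$ with $|\mathcal{J}| = n$, $\mathrm{rank}(C_\mathcal{J}) = n$, and $\mathcal{J} \notin \mathcal{V}$, the candidate $C_\mathcal{J}^{-1} \q_\mathcal{J}$ is infeasible in $\mathcal{S}(\q)$.

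The main obstacle is precisely this combinatorial step, and my strategy is a homotopy argument exploiting convexity of the cone $\mathbb{S}$. Note that $\sigma \in \mathbb{S}$ (since $CV^k \sigma \leq \sigma$ because $V^k \sigma \in \mathcal{S}(\sigma)$), so the segment $\q(t) := (1-t)\sigma + t\q$ lies in $\mathbb{S}$ for all $t \in [0,1]$. Along this segment, the candidate extra vertex $C_\mathcal{J}^{-1} \q(t)_\mathcal{J}$ varies affinely, and for $\mathcal{J} \notin \mathcal{V}$ there must exist some row $i \notin \mathcal{J}$ with $C_i C_\mathcal{J}^{-1} \sigma_\mathcal{J} > \sigma_i$ (since $\mathcal{F}_\mathcal{J}(\sigma) = \emptyset$ by the characterization of $\mathcal{V}$ as the collection of vertex-defining active sets of the entirely simple polytope $\mathcal{S}(\sigma)$). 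I would show that this strict violation persists at $t = 1$ by expressing $C_i C_\mathcal{J}^{-1}$ as a conic combination of rows of some block of $E$: the matrix identity $C_\mathcal{J} V^k = \I^{n_c}_{\mathcal{V}_k}$ together with $\mathcal{J} \neq \mathcal{V}_k$ provides the algebraic bridge, and the sign of the violation margin is then controlled by $E\q(t) \leq \mathbf{0}$ throughout the segment. The delicate bookkeeping here — exhibiting the explicit conic combination and checking that no cancellations can occur — is where I expect to spend most of the effort, and it is the algebraic content encoded in \cite[Theorem 2]{villanueva2022configurationconstrained}.
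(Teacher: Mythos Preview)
The paper does not give its own proof of this proposition; it simply cites \cite[Theorem~2]{villanueva2022configurationconstrained} and records the statement in the appendix. So there is nothing to compare against, and your proposal has to be assessed on its own merits.

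Your easy direction and the boundedness argument are correct and cleanly stated. The difficulty is entirely in the reverse inclusion, and there your plan has two concrete problems.

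First, the target you set yourself is too strong and in fact false on $\partial\mathbb{S}$. You propose to show that for \emph{every} $n$-element full-rank index set $\mathcal{J}\notin\mathcal{V}$ the point $C_\mathcal{J}^{-1}\q_\mathcal{J}$ is infeasible. But take $\q\in\partial\mathbb{S}$ where two genuine vertices coalesce, say $V^1\q=V^2\q$. The active set at this point contains $\mathcal{V}_1\cup\mathcal{V}_2$, and any full-rank $n$-subset $\mathcal{J}\subset\mathcal{V}_1\cup\mathcal{V}_2$ with $\mathcal{J}\notin\mathcal{V}$ gives $C_\mathcal{J}^{-1}\q_\mathcal{J}=V^1\q\in\mathcal{S}(\q)$. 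What you actually need is the weaker statement that every vertex of $\mathcal{S}(\q)$ equals some $V^k\q$; your homotopy idea does not obviously deliver even this, because the strict violation $g_i(0)>0$ can degrade to $g_i(1)=0$ without contradiction.

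Second, the algebraic bridge you invoke, ``$C_\mathcal{J} V^k=\I^{n_c}_{\mathcal{V}_k}$'', is simply wrong for $\mathcal{J}\neq\mathcal{V}_k$: one has $C_\mathcal{J} V^k=C_\mathcal{J} C_{\mathcal{V}_k}^{-1}\I^{n_c}_{\mathcal{V}_k}$, and the prefactor $C_\mathcal{J} C_{\mathcal{V}_k}^{-1}$ is not the identity. So the mechanism you sketch for propagating the sign of the violation via a conic combination of rows of $E$ does not get off the ground.

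A much cleaner route avoids tracking vertices altogether. Fix any direction $c$ and pick a vertex $V^{k^*}\sigma$ maximizing $c^\top y$ over $\mathcal{S}(\sigma)$; LP duality gives $\mu^*\geq 0$ with $C^\top\mu^*=c$ and $\mathrm{supp}(\mu^*)\subseteq\mathcal{V}_{k^*}$ (here entire simplicity of $\mathcal{S}(\sigma)$ is used to confine the support). Since $(CV^{k^*}\q)_i=\q_i$ for $i\in\mathcal{V}_{k^*}$, one gets $c^\top V^{k^*}\q=(\mu^*)^\top\q$, while any $x\in\mathcal{S}(\q)$ satisfies $c^\top x=(\mu^*)^\top Cx\leq(\mu^*)^\top\q$. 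Hence $\max_{x\in\mathcal{S}(\q)}c^\top x\leq\max_k c^\top V^k\q$ for every $c$, and a separating-hyperplane argument closes the inclusion. This is the argument underlying the cited theorem, and it sidesteps both issues above.
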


\end{document}